\newtheorem{theorem}{Theorem}
\newtheorem{corollary}{Corollary}[theorem]
\begin{document}

\title{Chebyshev polynomial representation of imaginary time response functions}
\author{Emanuel Gull}%
\affiliation{%
 Department of Physics, University of Michigan, Ann Arbor, Michigan 48109, USA
}%
\author{Sergei Iskakov}
\affiliation{%
 Department of Physics, University of Michigan, Ann Arbor, Michigan 48109, USA
}%
\author{Igor Krivenko}
\affiliation{%
 Department of Physics, University of Michigan, Ann Arbor, Michigan 48109, USA
}%
\author{Alexander A. Rusakov}
\affiliation{%
 Department of Chemistry, University of Michigan, Ann Arbor, Michigan 48109, USA
}%
\author{Dominika Zgid}%
\affiliation{%
 Department of Chemistry, University of Michigan, Ann Arbor, Michigan 48109, USA
}%
\date{\today}
\newcommand{\sump}{\ensuremath{\sideset{}{'}\sum}} 
\newcommand{\sumplim}[2]{ \sideset{}{'}\sum_{#1}^{#2}}
\newcommand{\sumponelim}[1]{ \sideset{}{'}\sum_{#1}}
\newcommand\myatop[2]{\genfrac{}{}{0pt}{}{#1}{#2}}

\begin{abstract}
Problems of finite temperature quantum statistical mechanics can be formulated in terms of imaginary (Euclidean) time Green's functions and
self-energies. In the context of realistic Hamiltonians, the large energy scale  of the Hamiltonian (as compared to temperature) necessitates a
very precise representation of these functions. In this paper, we explore the representation of Green's functions and self-energies in terms of series
of Chebyshev polynomials. We show that many operations, including convolutions, Fourier transforms, and the solution of the Dyson equation,
can straightforwardly be expressed in terms of the series expansion coefficients. We then compare the accuracy of the Chebyshev
representation for realistic systems with the uniform-power grid representation, which is most commonly used in this context.
\end{abstract}

\maketitle

\section{Introduction}
The equilibrium properties of interacting quantum systems at finite temperature can be described by the Matsubara formalism of quantum
statistical mechanics.\cite{AGD75} In this formalism, single- and two-particle quantities are expressed in terms of Green's functions,
self-energies, susceptibilities, and vertex functions in imaginary time.

The imaginary time formalism has a long tradition in the calculation of properties of interacting systems,\cite{Bloch58,Luttinger60} and
weak coupling methods such as the random phase approximation,\cite{Bohm53,GellMann57} the self-consistent second order
approximation,\cite{Dahlen05,Phillips14,Phillips15,doi:10.1021/acs.jctc.5b00884,Kananenka16,Rusakov16,Welden16} or the $GW$ method\cite{Hedin65} can be formulated in terms of imaginary time Green's
functions and self-energies. 
Numerical algorithms, including lattice Monte Carlo methods,\cite{Blankenbecler81} impurity solver
algorithms,\cite{Hirsch86,Rubtsov05,Werner06,Gull08_ctaux,Gull11_RMP} diagrammatic Monte Carlo methods\cite{Prokofev07} are similarly based
on finite-temperature Green's function formalism, as are some implementations of the dynamical mean field theory\cite{Georges96} and its
extensions.\cite{Toschi07,Rubtsov08,Maier05}

Finite temperature fermionic (bosonic) imaginary time Green's functions are antiperiodic (periodic) functions with period $\beta$ and
can be reduced to the interval $[0,\beta]$.
In most of the applications mentioned above, they are sampled on a uniform grid, typically with  $10^2$ to $10^4$ grid
discretization points. However, a uniform representation of the function is only efficient in effective model systems. In large
multi-orbital systems, and especially in systems with realistic band structures, an accurate representation of Green's functions with a
uniform discretization would require millions to billions of time slices per Green's function element, as the wide energy spacing of realistic
Hamiltonians results in features on very small time scales. Therefore, more compact representations of Green's functions are needed in this context.

A first attempt at constructing a more compact representation, the `uniform power mesh', was proposed by \textcite{Ku00}, see also
Ref.~\onlinecite{Ku02}. There, a set of logarithmically spaced nodes is chosen on the imaginary time interval. The Green's function is
then uniformly discretized between those nodes, using a constant number of points for each interval. This leads to a clustering of points near
$0$ and $\beta$, where much of the rapid change of Green's functions for low-lying excitations takes place. Later, Legendre
polynomial representations\cite{Boehnke11} were pioneered in the context of continuous-time Monte Carlo methods, where the compactness of the
representation could reduce the number of observables that needed to be accounted for, and in the context of analytical
continuation,\cite{Shinaoka17AC} where an intermediate basis of a singular value decomposed analytic continuation
kernel\cite{Shinaoka17Basis} could further reduce the number of coefficients. This was followed by progress in the context of perturbation methods for realistic
systems,\cite{doi:10.1021/acs.jctc.5b00884} where the combination of uniform power meshes and Legendre polynomial expansions drastically
reduced the size of the imaginary time grid. In Matsubara frequency space, Ref.~\onlinecite{Kananenka16}
showed that much of the Matsubara frequency dependence of Green's functions and self-energies can be represented by interpolation functions,
thereby vastly reducing the number of frequencies required to obtain accurate results.

For practical use in real materials simulations, a set of basis functions for imaginary time Green's functions and self-energies should
satisfy at least the following criteria. First and foremost, it should be possible to represent the large energy spread of typical
interacting systems with a small number of coefficients. Second, it should be straightforward to confirm that the representation is fully
converged, i.e. that basis truncation errors are small. Finally, the mathematics of performing typical operations on Green's functions, such
as evaluating a self-energy, a polarization bubble, a Dyson equation, or Fourier transforming data to frequency space and evaluating
energies should be straightforward, both analytically and in terms of the numerical effort.

The representations mentioned above satisfy some but not all of these requirements. In this paper, we therefore introduce an alternative representation of
imaginary time Green's functions, based on approximating the Green's functions by a sum of scaled Chebyshev polynomials of the first kind. 
We test the performance of this expansion explicitly for a variety of systems in realistic basis sets, including periodic solids. We examine how 
the number of coefficients converges as a function of temperature, basis set size, and precision required, and we show how Fourier transforms
and Dyson equations can be solved directly in Chebyshev space.

The remainder of this paper is organized as follows. In section \ref{sec:method}, we present the detailed derivation of the method. In
section \ref{sec:results}, we list and discuss the numerical results of our method as applied to realistic molecular and solid state
calculations. We present conclusions in section \ref{sec:conclusions}.

\section{Method}\label{sec:method}

\subsection{Chebyshev expansion of response functions}
Imaginary time Green's functions $G^{\nu\mu}(\tau)=-\langle c_\nu(\tau)c_\mu^\dagger(0)\rangle$ are defined for $0\leq \tau\leq\beta$, where
$\beta$ is the inverse temperature and Greek letters correspond to orbital indices. Outside this interval, fermionic Green's functions
satisfy $\beta$ anti-periodicity $G(-\tau)=-G(\beta-\tau)$, whereas bosonic response functions are $\beta$-periodic, $\chi(-\tau)=\chi(\beta
- \tau)$. In the following we will work on the interval $[0,\beta]$, and use the mapping $x(\tau)=\frac{2\tau}{\beta}-1$ to map it to the
interval $[-1,1]$.

The Chebyshev polynomials of the first kind, $T_{j}(x)$,\cite{Mason03} form a complete basis for bounded functions in this interval. Any Green's
function, or other response function can therefore be expanded into a sum of Chebyshev polynomials and approximated by a truncated Chebyshev series
\begin{align}
G^{\nu\mu}(x)&\approx\sumplim{j=0}{m} g_j^{\nu\mu}T_j(x)=\sum_{j=0}^{m}
g_j^{\nu\mu}T_j(x)-\frac{1}{2}g_0^{\nu\mu}\label{Eq:chebyshev_series},\\
g_j^{\nu\mu}&=\frac{2}{\pi}\int_{-1}^1\frac{G^{\nu\mu}(x)T_j(x)}{\sqrt{1-x^2}}dx.\label{Eq:gdef}
\end{align}
The primed sum denotes the special treatment of the coefficient $g_0$ customary in this context.~\cite{NR93} Based on the discrete
orthogonality properties of the Chebyshev polynomials,~\cite{NR93} if the values of $G^{\nu\mu}(x)$ are known at the zeros of the $m$th
Chebyshev polynomial, $x_{k}=\cos\left(\frac{2k-1}{2m}\pi\right),\ \ k=1,\ldots,m$, the calculation of the coefficients in Eq.~\ref{Eq:gdef}
simplifies to a discrete cosine transform. In addition, values of the Chebyshev approximant anywhere in the interval $0\leq\tau\leq\beta$
can be obtained from $g_j^{\nu\mu}$ using Clenshaw recursion relations.\cite{Clenshaw55}

Chebyshev representations are particularly efficient for approximating analytic functions on the interval $[-1,1]$, as approximation theory
guarantees that the magnitude of the coefficients $g_j^{\nu\mu}$ decays at least exponentially as $j\rightarrow\infty$, and that the maximum
difference between $G$ and its Chebyshev approximant decreases exponentially.\cite{Mason03} The fermionic and bosonic imaginary time Green's
functions, polarization functions, self-energies, and response functions appearing in finite-temperature many-body theory are all analytic
functions between $0$ and $\beta$.

As we will show in Sec.~\ref{sec:results}, fast convergence of Green's functions and self-energies  with the number of Chebyshev
coefficients is observed, and the discrete cosine transforms and recursion relations allow for quick numerical operations on the data in
practice. We find that our examples converge to a precision of $10^{-10}$ within about $40$ coefficients for the simple realistic systems, such as
hydrogen molecules, whereas around $500$ Chebyshev nodes are required to describe a krypton atom in a pseudopotential approximation.

\subsection{Convolutions}
Products of Matsubara Green's functions correspond to convolutions in imaginary time.
The convolution
\begin{align}
A(t)=\int_0^\beta d\tau B(t-\tau)C(\tau)\label{Eq:conv}
\end{align}
with $A$, $B$, and $C$ Green's functions or self-energies, requires careful treatment of the discontinuity of $B(t-\tau)$ at $t=\tau$,
so that standard Chebyshev convolution formulas \cite{Hale14} cannot be applied. Instead, we express Eq.~\ref{Eq:conv} by expanding the rescaled
integral into Chebyshev components (appropriately rescaling the zero coefficients)
\begin{align}
\sump_ja_j^{\nu\mu} T_j(x)&=\sum_{kl\xi}b_k^{\nu\xi} c_l^{\xi\mu} I_{kl}(x),\\
I_{kl}(x)&=\frac{\beta}{2}\Big[\int_{-1}^x T_k(x-y-1)T_l(y)dy\\&\mp\int_x^1 T_k(x-y+1)T_l(y) dy\Big]\nonumber,
\end{align}
where the minus (plus) sign corresponds to a convolution of fermionic (bosonic) functions, $I_{kl}(x)$ is an integral of polynomials in
$[-1,1]$ and can therefore be written as a Chebyshev series,
\begin{align}
I_{kl}(x)=\frac{\beta}{2}\sump_j t^j_{kl} T_j(x),
\end{align}
resulting in the formulation of the fermionic convolution as a matrix multiplication
\begin{align}
a_j^{\nu\mu}=\frac{\beta}{2}\sum_{kl\xi}b_k^{\nu\xi}c_l^{\xi\mu}t^j_{kl}.
\end{align}
This representation becomes more efficient than the Fourier representation whenever a very large number of Fourier components is
required. A detailed derivation of recursion relations for bosonic and fermionic integrals $t^j_{kl}$ is provided in the appendix.

\subsection{Dyson Equation}\label{sec:Dyson}
Most diagrammatic algorithms are formulated in imaginary time, where the  interaction vertex $V_{pqrs}$ is instantaneous.
However, most contain a step for solving a Dyson equation, either for adjusting a chemical potential to the desired particle number or to
obtain self-consistent propagators. This Dyson equation $G=G_0+G_0\Sigma G$ is most conveniently expressed in frequency space, where it can
be solved for each frequency independently.
In imaginary time, the Dyson equation determining $G$, given $G_0$ and $\Sigma$, corresponds to a Fredholm integral equation of the second
kind.~\cite{Zemyan12,Sakkinen15} As in the case of the Fourier transforms and convolutions, the discontinuity at zero and the highly
non-uniform nature of the Green's functions make uniform discretizations \cite{Dahlen05} inefficient. Defining $B(t)=\int d\tau
G_0(t-\tau)\Sigma(\tau)$ and expanding into Chebyshev coefficients, we obtain the equation
\begin{align}
g_j^{\nu\mu}=g_{(0)j}^{\nu\mu}+\frac{\beta}{2}\sum_{kl\xi}b_k^{\nu\xi}g_l^{\xi\mu}t^j_{kl}
\end{align}
with $t^j_{kl}$ defined as above. This linear system can be
recast as $\sum_{j\xi} A_{ij}^{\nu\xi}g_j^{\xi\mu}=g_{0i}^{\nu\mu}$ with a matrix $A_{ij}^{\nu\mu}=\delta_{ij}\delta_{\nu\mu}-\frac{\beta}{2}\sum_k b_k^{\nu\mu}t^j_{kl}$.
The solution of the Fredholm integral equation is thereby mapped onto the solution of a system of linear equations, 
bypassing the Matsubara domain entirely.

\subsection{Fourier Transforms}\label{sec:Fourier}
Fourier transforms between time and frequency domains require a careful treatment of the Green's function around $\tau=0$. At this point,
fermionic Green's functions are discontinuous due to the fermion anticommutation relation. This discontinuity is usually absorbed in an
explicit treatment of the short time (high frequency) behavior using high frequency expansions and suitable model
functions.\cite{Comanac07,Rusakov14,Gull11_RMP} Even with this high-frequency treatment, the number of Matsubara frequencies required for
accurate energies and spectra of realistic systems at low temperature is very large. This is because the spacing of the Matsubara points is
given by the inverse temperature, whereas the location of the main features of the function is given by the energy scale of the Hamiltonian.
In realistic systems, these energy scales are different by many orders of magnitude, requiring millions to billions of frequency points.
Adaptive grids, such as the one developed in Ref.~\onlinecite{Kananenka16}, provide a partial solution to this problem.

Fourier transforms of Chebyshev polynomials to Matsubara frequencies 
$\omega_n=\frac{(2n+\zeta)\pi}{\beta}$ ($\zeta=0$ for bosons and $\zeta=1$ for 
fermions), are obtained by evaluating the integral\cite{Fokas12}
\begin{align}\label{Eq:ftdef}
\mathcal{F}(T_{m})(i\omega_n) &=\int_{0}^{\beta}d\tau T_{m}(x(\tau))e^{i \omega_n \tau} = \nonumber \\
& = \frac{\beta}{2}\int_{-1}^{1}dx T_{m}(x)e^{i\lambda_n \frac{{x+1}}{2}}
= F^\zeta_{mn}
\end{align}
for dimensionless Matsubara frequencies $\lambda_n=(2n+\zeta)\pi$. In the special case of bosonic Matsubara frequency zero, we find that
\begin{equation}
F^0_{m0} = \frac{\beta}{2}\int_{-1}^1dx T_m(x) =
\frac{\beta}{2}
\begin{cases}
\frac{1+(-1)^m}{1 - m^2}, &m\neq1,\\
0, &m=1.
\end{cases}
\end{equation}
For all non-zero $\lambda_n$, partial integration yields
\begin{align}\label{eq:Imn}
I^\zeta_{m}(n)&=\int_{-1}^{1}dxT_{m}(x)e^{i\lambda_n\frac{x+1}{2}} =\left.\frac{2}{i\lambda_n}e^{i\lambda_n\frac{x+1}{2}}T_{m}(x)
\right|_{-1}^{1} \nonumber -\\ &-\frac{2}{i\lambda_n}\int_{-1}^{1}dxT_{m}'(x)e^{i\lambda_n\frac{x+1}{2}},
\end{align}
where the boundary term evaluates to
\begin{align}
\left.\frac{2}{i\lambda_n}e^{i\lambda_n\frac{x+1}{2}}T_{m}(x)\right|_{-1}^{1}=
2i\frac{(-1)^{m} - (-1)^\zeta}{\lambda_n}.
\end{align}
Using $T_{m}'(x)=mU_{m-1}(x),$ where $U_m(x)$ are
Chebyshev polynomials of the second kind related to $T_m(x)$ via
\begin{align}
U_{m}(x)=\begin{cases}
2\left(\sum_{j\ \text{odd}}^{m}T_{j}(x)\right) & m\ \text{odd},\\
2\left(\sum_{j\ \text{even}}^{m}T_{j}(x)\right)-1 & m\ \text{even},
\end{cases}
\end{align}
we transform the second integral in (\ref{eq:Imn}) as
\begin{align}
&\frac{2}{i\lambda_n}\int_{-1}^{1}dxT_{m}'(x)e^{i\lambda_n\frac{x+1}{2}}=
\frac{2m}{i\lambda_n}\int_{-1}^{1}U_{m-1}(x)e^{i\lambda_n
\frac{x+1}{2}}=\nonumber \\
&=\frac{2m}{i\lambda_n}\int_{-1}^{1}dx e^{i\lambda_n \frac{x+1}{2}}
\begin{cases}
2\sum_{j,\text{odd}}^{m-1} T_j(x), &m\ \text{even},\\
2\sum_{j,\text{even}}^{m-1} T_j(x) - 1, &m\ \text{odd}.
\end{cases}
\end{align}
This results in a recursion relation with respect to index $m$,
\begin{align}
&I^\zeta_m(n) = 2i\frac{(-1)^{m}-(-1)^\zeta}{\lambda_n} +\nonumber\\+&\frac{2im}{\lambda_n}
\begin{cases}
2\left(\sum_{j\ \text{odd}}^{m-1}I^\zeta_{j}(n)\right), & m\ \text{even},\\
2\left(\sum_{j\ \text{even}}^{m-1}I^\zeta_{j}(n)\right)
-I^\zeta_0(n), & m\ \text{odd},
\end{cases}
\end{align}
which we start by explicitly computing
$I^0_0(n) = 2\delta_{0,n}$ or $I^1_0(n) = 4i/\lambda_n$.
This recursion relation is unstable\cite{doi:10.1093/imanum/drq036} and therefore has to be
implemented in high precision arithmetic. With Eq.~\ref{Eq:ftdef} we then
write the Fourier transform as
\begin{align}\label{eq:FTMatMul}
\mathcal{F}(G)(i\omega_{n})=\sum_{j}g_{j}F^\zeta_{jn},
\end{align}
where the Fourier matrix $F^\zeta_{jn}$ is computed once and tabulated. The inverse transform is done by evaluating the function at the Chebyshev
nodes and using a discrete cosine transform to obtain the corresponding coefficients.

Accurate Fourier transforms and energy evaluations in Fourier space require high frequency expansion coefficients for the Green's function
to at least third order, so that $G(i\omega_n)=\frac{c_1}{i\omega_n}+\frac{c_2}{(i\omega_n)^2}+\frac{c_3}{(i\omega_n)^3}+O((i\omega_n)^{-4})$.
Fourier transform of the Green's function implies that $c_1= - (G(0^+) + G(\beta^-))$; $c_2=G'(0^+)+G'(\beta^-)$; and in general
$c_{k+1}=(-1)^{k+1}(G^{(k)}(0^+)+G^{(k)}(\beta^-))$.
These expansion coefficients are available to any order due to the identities
\begin{align}
\left.\frac{d^{p}T_{n}}{dx^{p}}\right|_{x=\pm1}=(\pm1)^{n+p}\prod_{k=0}^{p-1}\frac{n^{2}-k^{2}}{2k+1},
\end{align}
and in particular
\begin{align}
\left.\frac{dT_{n}}{dx}\right|_{x=\pm1}&=(\pm1)^{n}n^{2},\\
\left.\frac{d^{2}T_{n}}{dx^{2}}\right|_{x=\pm1}&=(\pm1)^{n}\frac{n^{4}-n^{2}}{3}.
\end{align}

\begin{figure}[bth]
\includegraphics[width=\columnwidth]{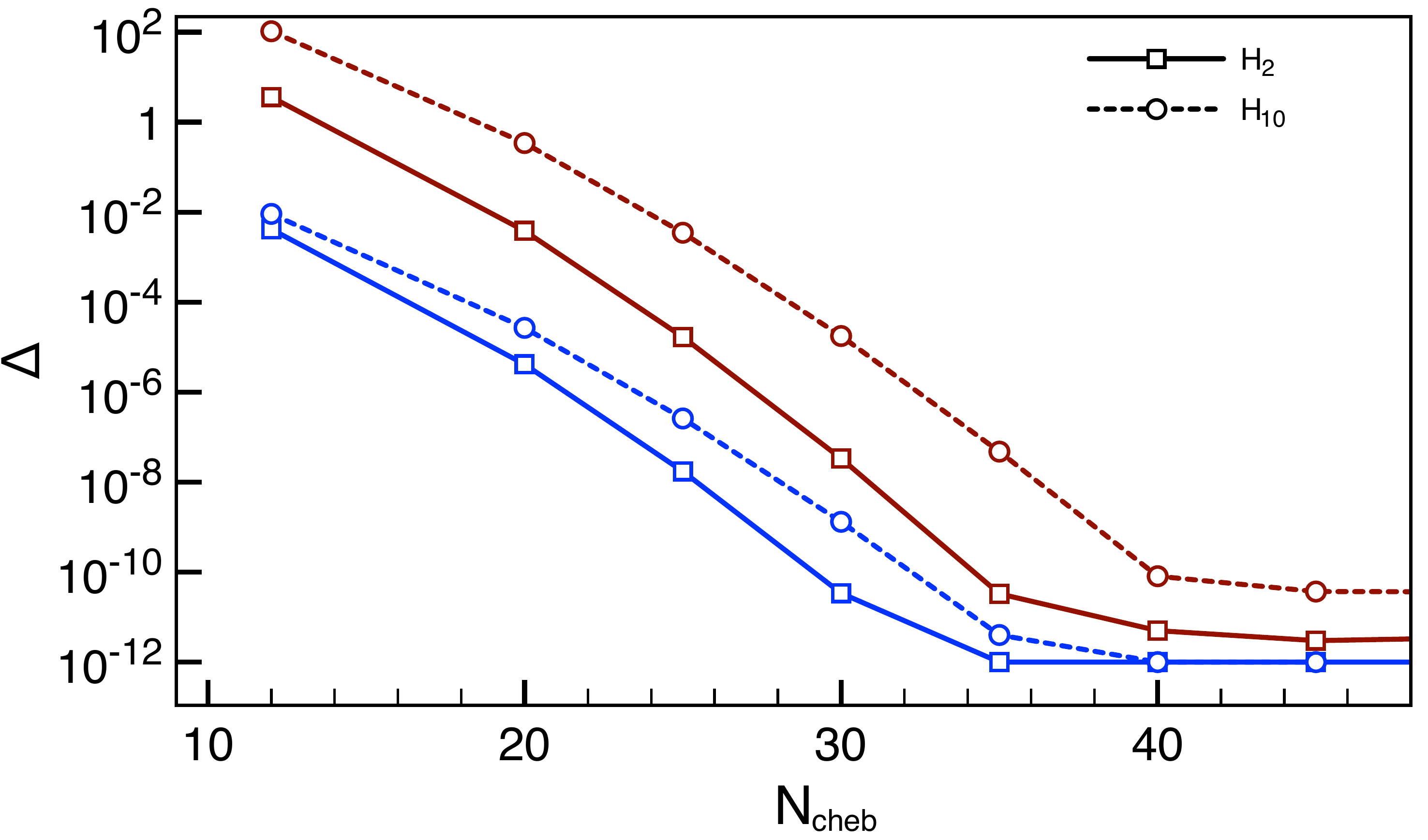}
\includegraphics[width=\columnwidth]{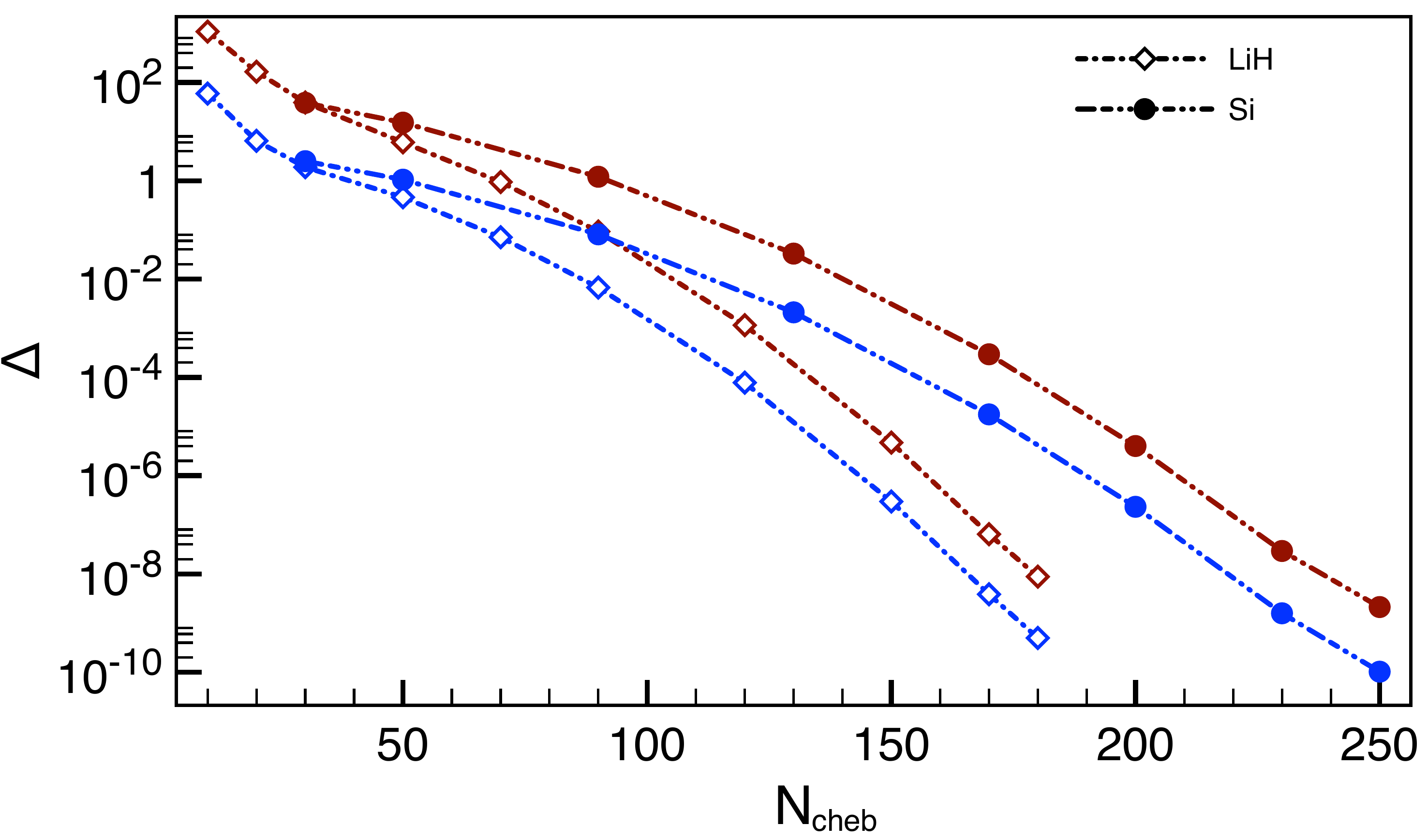}
\caption{Convergence of the Hartree-Fock Green's function with the number of Chebyshev polynomials. Red curves correspond to the sum 
of all differences with respect to the exact result. Blue curves correspond to the maximum difference when compared to the exact result. 
Top panel: H$_2$ molecule (open square) and H$_{10}$ molecule (open circle). Bottom panel: periodic one-dimensional LiH (open
diamond) and three-dimensional Si crystal (filled circle). Parameters as specified in Table~\ref{tab:Parameters}.}
\label{fig:ChebDiff}
\end{figure}

\section{Results}\label{sec:results}
To demonstrate the efficiency of the proposed method we consider four test systems. In order to make the results reproducible, we use
electronic structure systems in standardized basis sets that are well documented\cite{szabo1996modern} and readily
available.\cite{FellerBSE,SchuchardtBSE} The first two systems are hydrogen molecules, both as H$_2$ and as a one-dimensional chain of ten
hydrogen atoms. We use the minimal STO-3g basis\cite{Hehre69} and place the atoms at an inter-atomic distance of $d=1.5$\AA{}. These cases
are chosen as `easy' examples of realistic calculations. We also consider two periodic systems. First, a one-dimensional periodic LiH solid
in the triple-zeta quality basis set (pob-TZVP) from Ref.~\onlinecite{doi:10.1002/jcc.23153} and, 
second, a three-dimensional Si crystal in the following basis set: the innermost 1s, 2s, and 2p shells are replaced with the LANL2DZ effective core potentials,\cite{doi:10.1063/1.448799,doi:10.1063/1.448800,doi:10.1063/1.448975} while the basis functions for the outer 3s, 3p, and 3d shells are taken from the Si\_88-31G*\_nada\_1996 basis.\cite{doi:10.1002/SICI,doi:10.1021/jp002353q,Prencipe2006}
All systems were evaluated at an inverse temperature
of $\beta=100$Ha$^{-1}$. The detailed parameters are shown in Table~\ref{tab:Parameters}.

\begingroup
\squeezetable
\begin{table}[tbh]
\begin{ruledtabular}
\begin{center}
\begin{tabular}{ l | c | c | c }
\multicolumn{4}{c}{Molecular systems}\\
\hline
& Basis & \multicolumn{2}{c}{inter-atomic distance, \AA{}} \\
\hline
H$_{2}$& STO-3g& \multicolumn{2}{c}{1.5000} \\
\hline
H$_{10}$& STO-3g& \multicolumn{2}{c}{1.5000} \\
\hline\hline
\multicolumn{4}{c}{Periodic systems}\\
\hline
& Basis & Unit cell coordinates, \AA{} & Translation vectors, \AA{} \\
\hline
LiH& pob-TZVP& \begin{tabular}{@{}c@{}}Li 0.0 0.0 0.0\\H 1.671286 0.0 0.0 \end{tabular} & (3.342572, 0.0, 0.0) \\
\hline
Si& Custom, see text& \begin{tabular}{@{}c@{}}Si  0.0000    0.0000    0.0000\\Si  1.3575   1.3575    1.3575\end{tabular} & 
\begin{tabular}{@{}c@{}}
(0.0, 2.7150, 2.7150)\\
(2.7150, 0.0, 2.7150)\\
(2.7150, 2.7150, 0.0)
\end{tabular}\\
\end{tabular}
\end{center}
\end{ruledtabular}
\caption{\label{tab:Parameters}Geometries and basis sets for the systems used. 
All systems were evaluated at an inverse temperature of $\beta=100$Ha$^{-1}$.}
\end{table}
\endgroup

The exponential convergence predicted by theory can be observed in practice. Fig.~\ref{fig:ChebDiff} shows the convergence of the Chebyshev
Green's function to the exactly evaluated Hartree-Fock solution as a function of the number of coefficients. Shown is the difference $\Delta$ between the interpolated
Green's function and a reference Green's function evaluated analytically on a uniform-power grid (with 15 power points and 18 uniform
points between each pair of power points) as a function of the number of coefficients, both as the maximum deviation (dark red curves) and
as the sum of all deviations at all points (blue curves).  

It is evident that the Chebyshev approximation converges to the exactly evaluated Hartree-Fock solution as a function of the number of expansion coefficients, until numerical
roundoff errors are reached at a precision of $10^{-12}$. For the hydrogen systems used here, between $30$ and $40$ coefficients lead to a
maximum uncertainty of around $10^{-10}$. More complex systems require more coefficients, as illustrated by the one-dimensional LiH and
three-dimensional Si, which only reach a precision of $10^{-8}$ within $180$ and $230$ coefficients respectively (lower panel of
Fig.~\ref{fig:ChebDiff}). This convergence stems from the fast decay of the Chebyshev expansion coefficients. 
Fig.~\ref{fig:ChebCoeff} illustrates this point by showing the maximum magnitude of the Chebyshev coefficients of a given polynomial order
as a function of order. There, the noise floor is reached for around $40$ (H$_{2}$) or $80$ (H$_{10}$) coefficients, at a coefficient size
of around $10^{-16}$.

\begin{figure}[bth]
\includegraphics[width=\columnwidth]{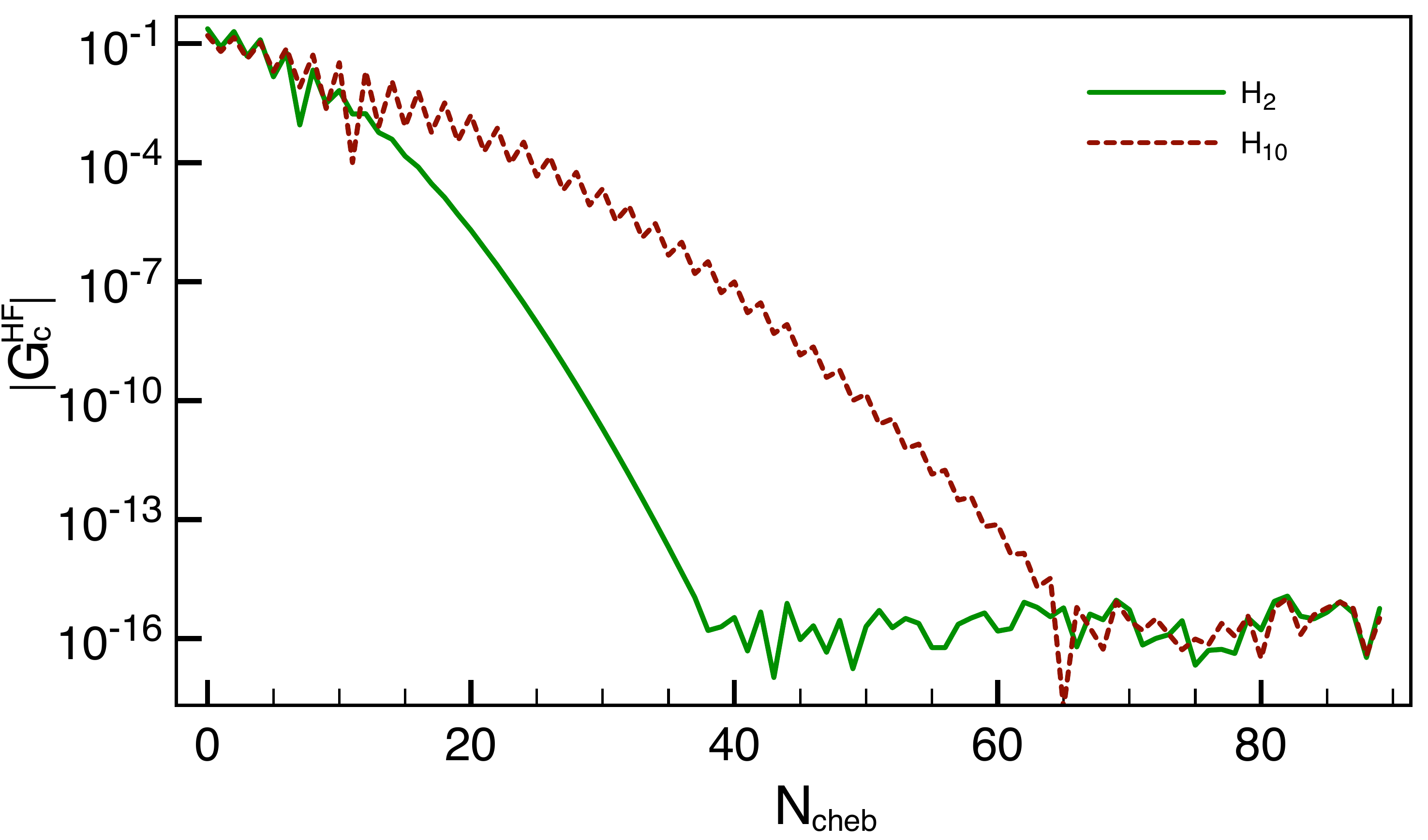}
\caption{Exponential decay of the Chebyshev coefficients for the  Hartree-Fock Green's function. H$_2$ molecule (green) and H$_{10}$
molecule (red, dashed) with parameters as specified in Table~\ref{tab:Parameters}.
}\label{fig:ChebCoeff}
\end{figure}

\begin{figure}[bth]
\includegraphics[width=\columnwidth]{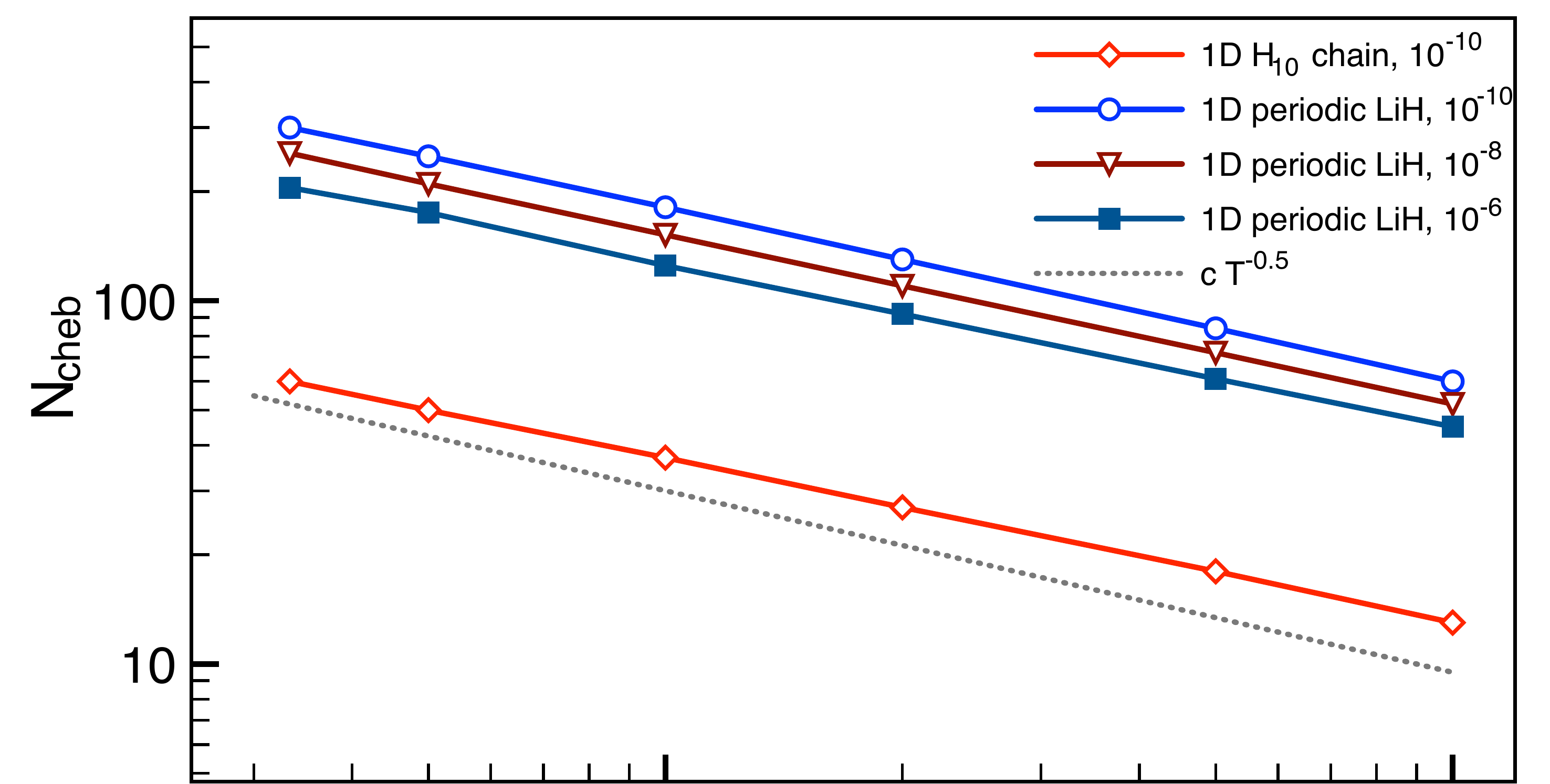}
\includegraphics[width=\columnwidth]{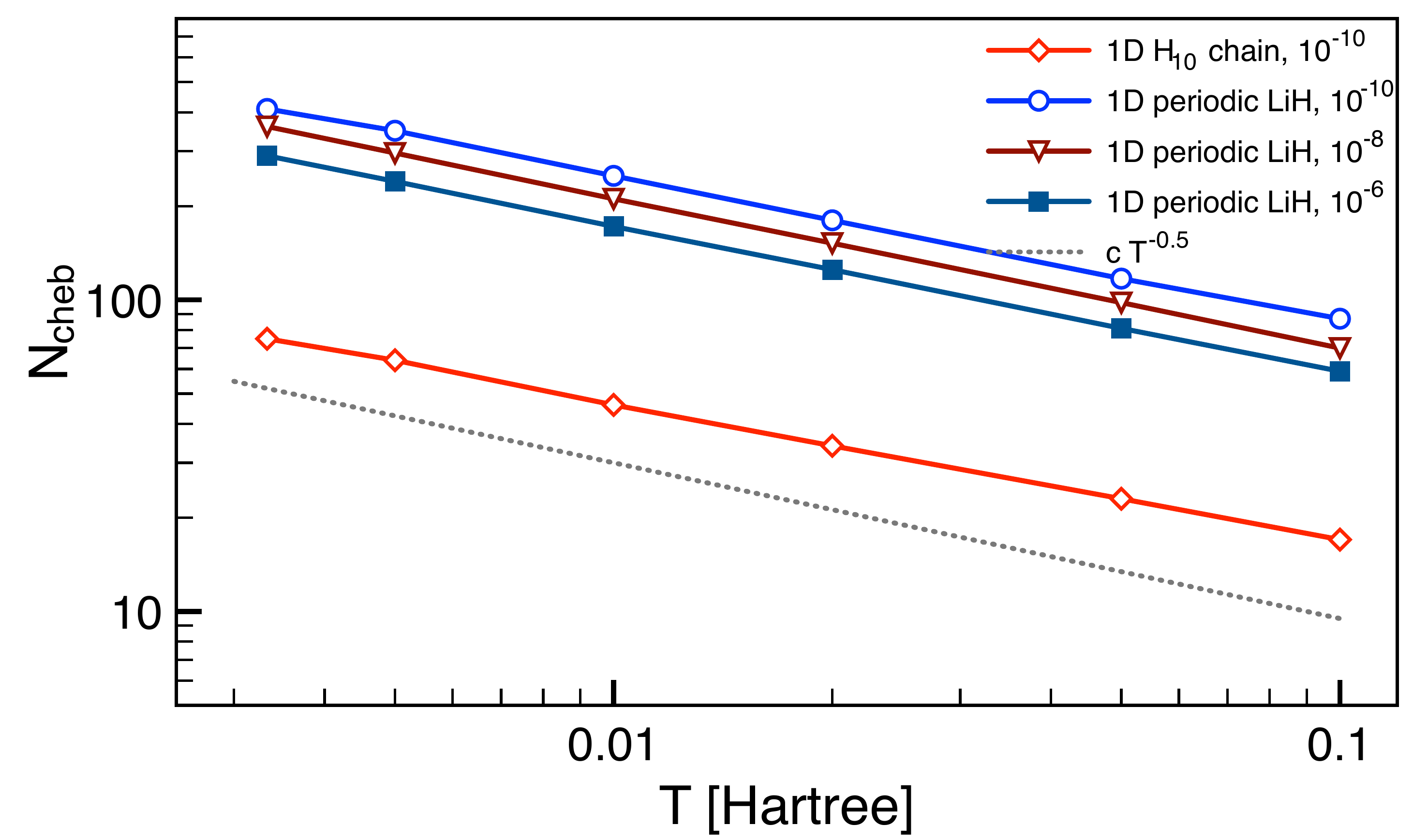}
\caption{Number of Chebyshev coefficients required to resolve the Hartree-Fock Green's function (top panel) and  the bare second order
self-energy (bottom panel) at temperature $T$ measured in Hartree up to the precision indicated, for one-dimensional H$_{10}$  and
one-dimensional periodic LiH. Parameters as specified in Table~\ref{tab:Parameters}.
}\label{fig:ConvWithJT}
\end{figure}

Fig.~\ref{fig:ConvWithJT} shows the number of Chebyshev coefficients required to reach a predetermined precision as temperature is varied.
The top panel analyzes the Hartree-Fock Green's function as the temperature is lowered, the bottom panel analyzes the second order
self-energy. The systems used are a linear chain of ten hydrogen atoms in the STO-3g basis and a periodic one-dimensional arrangement of LiH, both
systems with parameters chosen as in Table~\ref{tab:Parameters}. For the systems illustrated here, the log-log axes suggest a power law
behavior and a square-root fit shows that the scaling as a function of temperature grows slower than $\sim T^{-1/2}$, similar to observations in the context of model calculations in a Legendre basis.\cite{2018arXiv180307257C} 

\begin{figure}[tbh]
\includegraphics[width=\columnwidth]{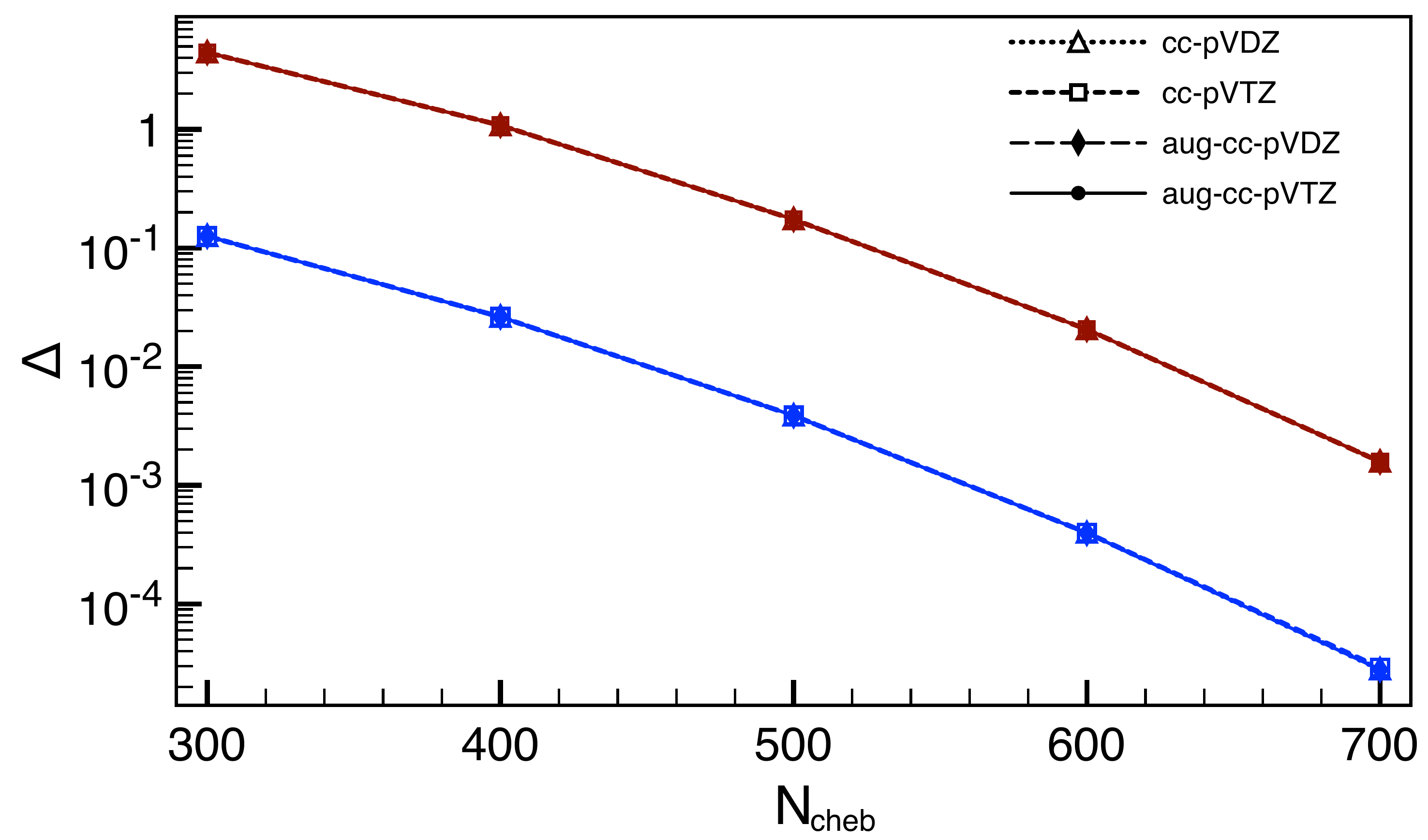}
\includegraphics[width=\columnwidth]{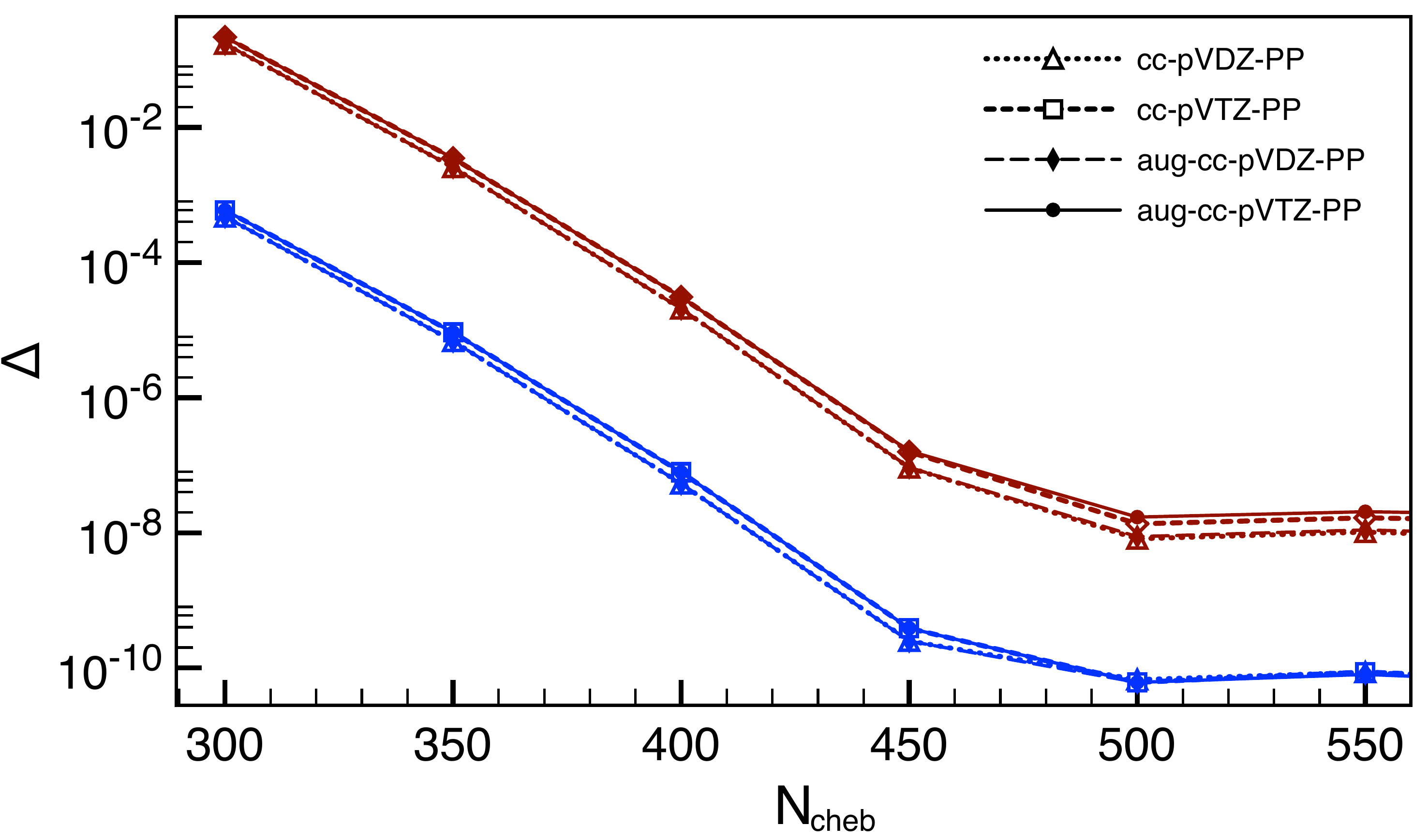}
\caption{Total (red) and maximum (blue) difference between the Chebyshev Green's function and the exact Hartree-Fock
Green's function as a function of the number of Chebyshev coefficients, for a Krypton atom in four different basis sets\cite{doi:10.1063/1.1622924,doi:10.1063/1.1622923,SchuchardtBSE} without (top panel)
and with (bottom panel) effective core potentials.}\label{fig:BasisDependence}
\end{figure}

The number of Chebyshev points required is strongly system dependent, and depends in particular on the energy spread of the atomic orbitals. 
This is illustrated in the top panel of Fig.~\ref{fig:BasisDependence}, which shows the maximum and total difference between an exactly evaluated Hartree-Fock Green's function and its Chebyshev approximant for a Kr atom as a function of the number of coefficients.
One can see that, for the bases chosen, the maximum error remains $\sim 10^{-4}$ even when $700$ components are chosen, independent of the basis.
This slow convergence is due to low-lying core states, which are fully occupied and in $\tau$-domain represented as an exponential decay
towards zero with a large decay constant.

Alternatively, choosing effective core potentials (lower panel of Fig.~\ref{fig:BasisDependence}) eliminates these low-lying states and causes a
more rapid convergence of the polynomial expansion with the number of coefficients (maximum difference of $10^{-9}$ at $\sim 450$ coefficients). Whether a
different treatment of the core states, for example by analytically modeling them with a delta function in real frequency, is more effective than a brute force expansion into Chebyshev polynomials is an open question for future
research.

\begin{figure}[tbh]
\includegraphics[width=\columnwidth]{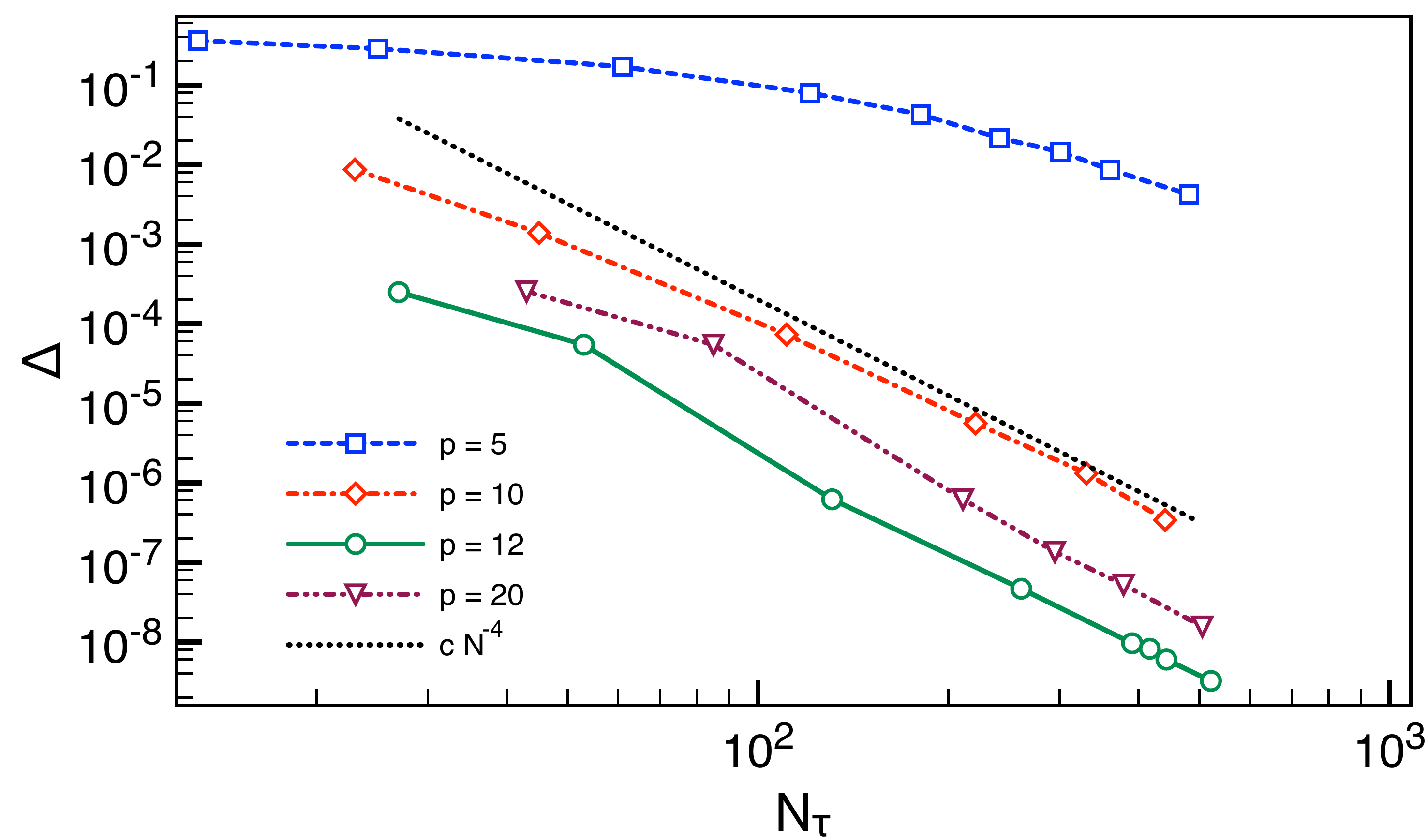}
\caption{Convergence of the power grid interpolation of a Hartree-Fock Green's function with the total number of points in the grid for the
periodic one-dimensional LiH solid. $p$ denotes the power discretization of the grid,
parameters as specified in Table~\ref{tab:Parameters}.}\label{fig:PowerGridDiff}
\end{figure}
In order to contrast these results to the commonly used uniform power grids,  Fig.~\ref{fig:PowerGridDiff} shows the convergence of power grid data
to the exact result for the periodic one-dimensional LiH solid of Fig.~\ref{fig:ChebDiff} (lower panel). Data on the power grid is interpolated using cubic splines. 
This leads to a convergence $\sim u^{-4}$ as a function of the  uniform spacing $u$. It is evident that there is an `optimal'
number of power points ($12$, in this case) which minimizes the prefactor of the convergence to the exact result (but does not change the
scaling). For results accurate to $10^{-8}$, $350$ $\tau$-points were necessary for the optimal choice of power grid parameters, around
twice as many as for the Chebyshev grid.

\begin{figure}[tbh]
\includegraphics[width=\columnwidth]{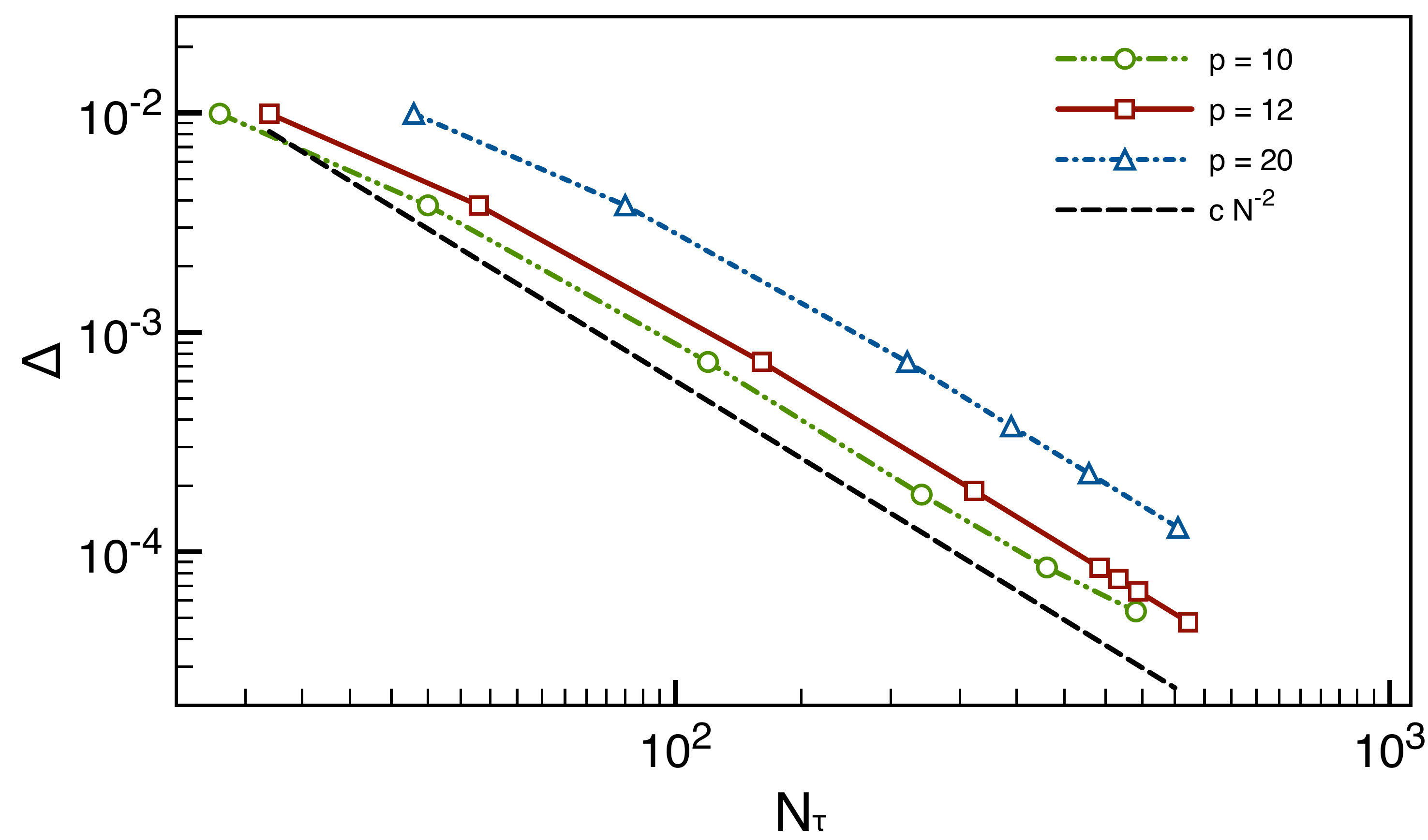}
\caption{Convergence of the Dyson equation solution for the H$_2$ molecule in discretized imaginary time using the method proposed in
Ref.~\onlinecite{Stan09}. $p$ denotes the number of power points of the grid and parameters are as
specified in Table~\ref{tab:Parameters}.}
\label{fig:DiscreteDysonDiff}
\end{figure}

The full power of the Chebyshev representation becomes apparent when Fourier integrals or a solution of the Dyson equation have to be computed for
data known in the imaginary time domain. Fig.~\ref{fig:DiscreteDysonDiff} shows the convergence of the solution of the Dyson equation using
trapezoidal integration in imaginary time, as originally proposed in Ref.~\onlinecite{Stan09}. The system studied is H$_2$ in the STO-3g basis, discrete imaginary time points are defined on a power grid with different numbers of power
points. The precision of this method is limited by the convergence rate of the trapezoidal integration of the uniform part, which is only
quadratic, such that even with $500$ time slices, only a precision of roughly $10^{-4}$ can be achieved.
In contrast, Fig.~\ref{fig:ChebCollocationDiff} shows that with the method introduced in this paper, convergence is
faster than exponential. Using around $50$ slices, a precision close to $10^{-10}$ can be reached.

Similar behavior is obtained whenever Fourier integrals need to be computed from a uniform-power grid (not shown here). Data is usually
first interpolated by a spline onto a uniform grid and then Fourier transformed to Matsubara frequencies using a fast Fourier transform.
The convergence of the spline to the exact result is the leading contribution to the error of the transform and leads to inaccuracies in the
intermediate-to-high frequency region. In contrast, the closed form of Eq.~\ref{eq:FTMatMul} avoids this interpolation step entirely.

Finally, we summarize the different aspects of basis functions for imaginary 
time Green's functions in Tab.~\ref{tab:aspects}. The comparison is subjective 
by nature, and the suitability of a given basis set will very much depend on 
the application. We list the compactness (or suitability for large realistic 
systems), the cost of constructing the basis set (cheap or expensive), the ways 
of evaluating arbitrary imaginary-time points (via interpolation, recursion, 
analytic continuation formula, fast Fourier transform, or non-equidistant FFT), 
the ways of evaluating Matsubara points, and the preferred (or so far tested) 
ways of solving the Dyson equation. 

\begin{table}[tbh]
\begin{center}
    \begin{tabular}{ | l || l | l | l | l | l |}
    \hline \hline
    Basis & Comp. & Const. & Imag & Mat & Dyson \\ \hline\hline\hline
    Uniform & No & cheap & interp. & FFT & Fourier\\ \hline
    Power & No & cheap & interp. & Fourier & Fourier\\ \hline\hline
    Chebyshev & Yes & cheap & Recursion & Sec.~\ref{sec:Fourier} & Sec.~\ref{sec:Dyson}\\ \hline
    Legendre & $\sim$Yes & cheap & Recursion & Ref.~\onlinecite{Boehnke11} & Fourier\\ \hline\hline
    IR \onlinecite{Shinaoka17Basis} & Very& exp. & AC & AC & AC\\ \hline\hline
    Matsubara & No& cheap & FFT & - & diag.\\ \hline
    Spline \onlinecite{doi:10.1021/acs.jctc.5b00884}& $\sim$Yes& exp. & NFFT & interp. &diag.\\ \hline
    \hline
    \end{tabular}
\end{center}
\caption{\label{tab:aspects}Subjective comparison of different aspects of the 
various basis sets for finite-T Green's functions. The basis sets vastly differ 
in their Compactness (Comp.), basis construction effort (Const.), way of 
evaluating imaginary time (Imag.) or Matsubara (Mat.) Green's function values, 
and ways of solving the Dyson equation (via Fourier to Matsubara space, where the equation is diagonal, or as described in the main text).}
\end{table}

\begin{figure}[tbh]
\includegraphics[width=\columnwidth]{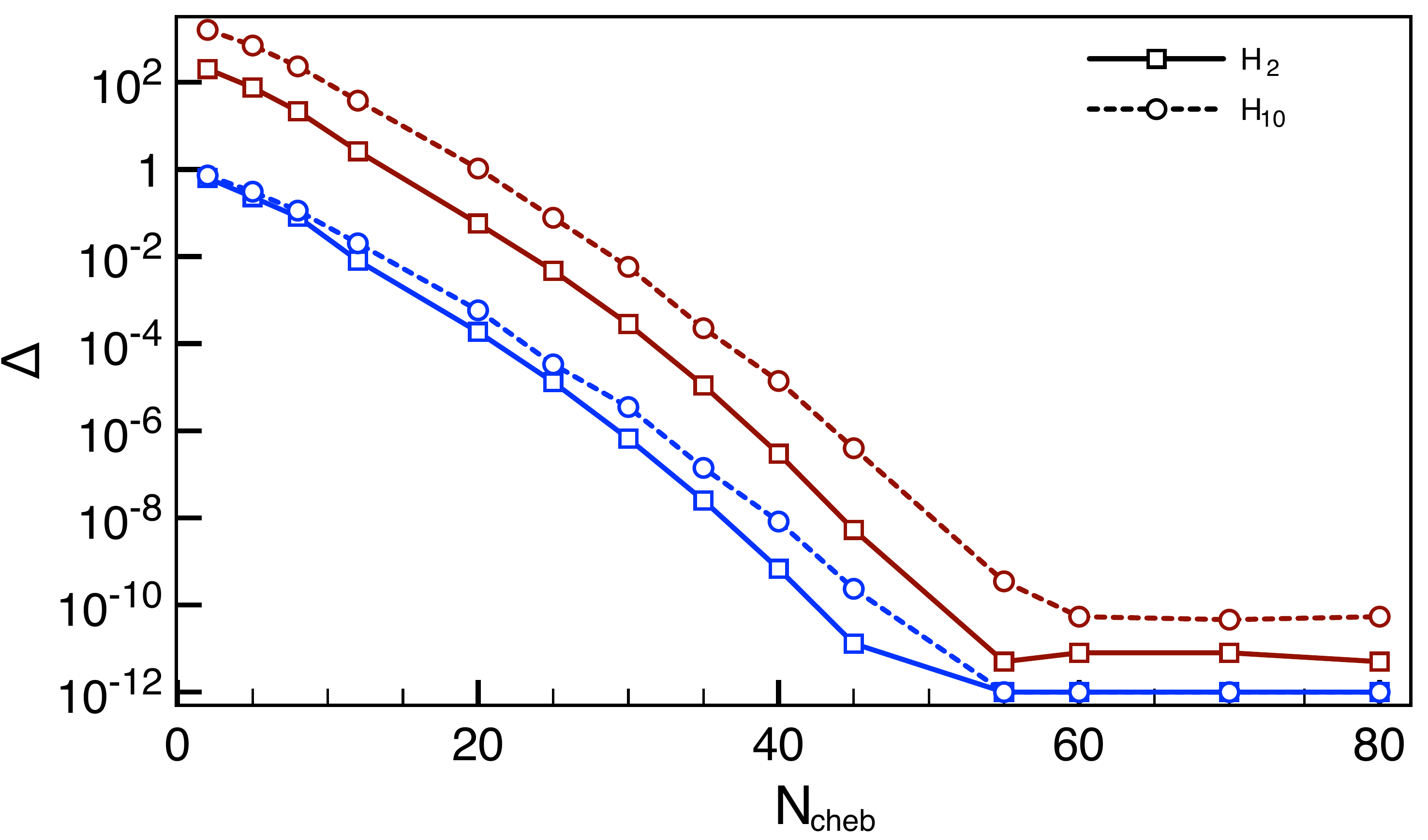}
\caption{Convergence of the Dyson equation solution for H$_2$ and H$_{10}$ with the number of Chebyshev polynomials with parameters specified in Table~\ref{tab:Parameters}.
Red lines: Sum of errors. Blue lines: Maximum error. Squares: H$_2$. Circles: H$_{10}$.}\label{fig:ChebCollocationDiff}
\end{figure}

\section{Conclusions}\label{sec:conclusions}
In conclusion, we have explored the use of an orthogonal polynomial basis for imaginary time Green's functions in the context of
realistic materials. We have observed the exponential convergence guaranteed by the analytic nature of Green's functions in practice, and
shown that for typical systems, substantially fewer imaginary time points are needed than for a uniform-power grid. The convergence rate of
the expansion depends on the system. While low-lying core states present a difficulty for this basis and lead to a slow convergence of the
expansion, the complex spectral behavior near the chemical potential is well captured by the first $50$-$100$ coefficients in the systems examined here.
We have also shown that convolutions, Dyson equations, and Fourier transforms, which correspond to commonly used operations on imaginary
time Green's functions, can be performed accurately and efficiently.
This paves the way for using Chebyshev approximated imaginary time Green's functions in calculations of realistic and model systems,
replacing the uniform and uniform-power grids that are so far being used in this context.

\acknowledgments{
EG and SI were supported by the Simons Foundation via the Simons Collaboration on the Many-Electron Problem. 
IK was supported by DOE ER 46932, DZ and AAR by NSF-CHE-1453894.
This research used resources
of the National Energy Research Scientific Computing Center, a DOE Office of Science User Facility supported by the Office of Science of the
U.S. Department of Energy under Contract No. DE-AC02-05CH11231.
}
\bibliographystyle{apsrev4-1}
\bibliography{refs}

\appendix
\section{Convolution of Chebyshev polynomials and $t$-coefficients}
Chebyshev interpolation can be used to solve Fredholm integral equations with piecewise-continuous convolution kernels. Solution of such
problems requires knowledge of a special system of coefficients $t_{kl}^j$ defined according to
\begin{equation}\label{t_coeff}
    T_{kl}(x) = \sump_{j} t_{kl}^j T_j(x).
\end{equation}
\begin{subequations}
    \begin{align}
        T_{kl}(x) &= T_{kl}^{-}(x) \pm T_{kl}^{+}(x), \label{Decoup} \\
        T_{kl}^{-}(x) &\equiv \int_{-1}^{x} T_k(x-\tau-1) T_l(\tau) d\tau, \label{Int1} \\
        T_{kl}^{+}(x) &\equiv -\int_{x}^{1} T_k(x-\tau+1) T_l(\tau) d\tau. \label{Int2}
    \end{align}
\end{subequations}

The plus (minus) sign in equation \ref{Int2} corresponds to fermionic (bosonic) Green's function case.
In the following derivations we will use the product formula for $T_m(x)$,
\begin{equation}\label{T_product}
    T_m(x) T_n(x) = \frac{1}{2} \left[T_{m+n}(x) + T_{|m-n|}(x)\right],
\end{equation}
and, in particular, its special case of $n=1$,
\begin{equation}\label{T_rec}
    T_{m+1}(x) = 2x T_{m}(x) - T_{|m-1|}(x).
\end{equation}

The Chebyshev polynomials $T_m(x)$ form an orthogonal system of functions on segment $[-1;1]$ w.r.t. scalar product
\begin{align}
    &\left\langle f(x), g(x) \right\rangle =
    \int_{-1}^1 f(x) g(x) \frac{dx}{\sqrt{1-x^2}},\\
&\left\langle T_m(x), T_n(x) \right\rangle =
\pi\frac{1+\delta_{n,0}}{2}\delta_{m,n}.
\end{align}
The orthogonality property means that two equal Chebyshev polynomial expansions have to be equal order by order.

Another result we will need is the indefinite integral of the Chebyshev polynomials,
\begin{equation}
\int T_n(x) dx =
\left\{
\begin{array}{ll}
\frac{1}{2}\left(\frac{T_{n+1}(x)}{n+1} + \frac{T_{n-1}(x)}{n-1}\right) + C, &n\neq1, \\
\frac{T_2(x) + T_0(x)}{4} + C, &n=1.
\end{array}\right.
\end{equation}

\subsection{Symmetry properties of $T_{kl}(x)$}

\begin{theorem}\label{theorem_kl_sym}
Functions $T_{kl}(x)$ are symmetric w.r.t. their indices, $T_{lk}(x) = T_{kl}(x)$.
\end{theorem}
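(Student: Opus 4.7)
The plan is to prove the symmetry separately for each of the two integrals $T_{kl}^-(x)$ and $T_{kl}^+(x)$ by a change of variables that swaps the roles of the arguments of $T_k$ and $T_l$, after which the symmetry of the combination $T_{kl}(x) = T_{kl}^-(x) \pm T_{kl}^+(x)$ follows immediately.

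For $T_{kl}^-(x)$, I would introduce the substitution $\sigma = x - \tau - 1$, so that $\tau = x - \sigma - 1$ and $d\tau = -d\sigma$. Under this transformation the limits $\tau = -1$ and $\tau = x$ map to $\sigma = x$ and $\sigma = -1$ respectively, and the sign coming from $d\tau = -d\sigma$ exactly reverses the integration bounds. The integrand $T_k(x-\tau-1)T_l(\tau)$ becomes $T_k(\sigma)\,T_l(x-\sigma-1)$, which is precisely the integrand of $T_{lk}^-(x)$ after relabeling $\sigma \to \tau$. Hence $T_{kl}^-(x) = T_{lk}^-(x)$.

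For $T_{kl}^+(x)$, the analogous substitution $\sigma = x - \tau + 1$ maps $\tau = x$ to $\sigma = 1$ and $\tau = 1$ to $\sigma = x$, producing
\begin{equation*}
T_{kl}^+(x) = -\int_{x}^{1} T_k(\sigma)\,T_l(x-\sigma+1)\,d\sigma = T_{lk}^+(x).
\end{equation*}
Combining the two results with the decomposition (\ref{Decoup}) yields $T_{kl}(x) = T_{kl}^-(x) \pm T_{kl}^+(x) = T_{lk}^-(x) \pm T_{lk}^+(x) = T_{lk}(x)$, as claimed.

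There is no real obstacle here: the statement is essentially the statement that imaginary-time convolution of two functions is commutative, and the only care needed is the bookkeeping of the integration limits and the sign from $d\tau = -d\sigma$ in each of the two substitutions. No use of the recursion (\ref{T_rec}), the product formula (\ref{T_product}), or orthogonality is required at this stage; these enter later when one actually extracts the coefficients $t_{kl}^j$ from the resulting Chebyshev expansion of $T_{kl}(x)$.
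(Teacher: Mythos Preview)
Your proof is correct and is essentially identical to the paper's: the paper uses the same substitution $t = x - \tau \mp 1$ in $T^{\mp}_{kl}(x)$ to swap the roles of the two Chebyshev factors and then combines via the decomposition (\ref{Decoup}). The only differences are cosmetic (the paper writes the argument starting from $T^-_{lk}$ rather than $T^-_{kl}$, and names the new variable $t$ instead of $\sigma$).
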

\begin{proof}\renewcommand{\qedsymbol}{}
\begin{align*}
    T^-_{lk}(x) &= \int_{-1}^{x} T_l(x-\tau-1) T_k(\tau) d\tau =
     \left\{\substack{x-\tau-1 = t\\dt = -d\tau}\right\}\\ &=
     -\int_{x}^{-1} T_l(t) T_k(x-t-1)dt \\
     &= \int_{-1}^{x} T_k(x-t-1) T_l(t)dt = T^-_{kl}(x).
\end{align*}
Similarly, $T^+_{lk}(x) = T^+_{kl}(x)$, and, according to definition (\ref{Decoup}), $T_{lk}(x) = T_{kl}(x)$.
\end{proof}
\begin{corollary}
    $t$-coefficients are symmetric w.r.t. their lower indices,
    $t^j_{lk} = t^j_{kl}$.
\end{corollary}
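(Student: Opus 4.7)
The plan is very short since essentially all the work has been done in Theorem~\ref{theorem_kl_sym} and in the orthogonality machinery already stated. The corollary is an immediate consequence of (i) the function-level symmetry $T_{kl}(x) = T_{lk}(x)$ established in the theorem, together with (ii) the uniqueness of Chebyshev expansion coefficients, which is guaranteed by the inner product $\langle T_m, T_n\rangle = \pi \tfrac{1+\delta_{n,0}}{2}\delta_{m,n}$ quoted just above.

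Concretely, I would first invoke the defining expansion \eqref{t_coeff} for both orderings of the lower indices, writing
\begin{equation*}
\sump_j t^j_{kl}\,T_j(x) \;=\; T_{kl}(x) \;=\; T_{lk}(x) \;=\; \sump_j t^j_{lk}\,T_j(x),
\end{equation*}
where the middle equality is exactly Theorem~\ref{theorem_kl_sym}. Then I would subtract the two series and pair the difference against $T_n(x)/\sqrt{1-x^2}$, integrating over $[-1,1]$; by the orthogonality relation each coefficient is isolated, yielding $t^j_{kl} - t^j_{lk} = 0$ for every $j$. (The primed sum just rescales the $j=0$ term by a fixed nonzero factor, so the same conclusion holds there.)

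There is essentially no obstacle to overcome: the substitution $t = x-\tau-1$ that drives the proof of the theorem has already transferred the symmetry from $T_{kl}^-$ (and analogously $T_{kl}^+$) to $T_{kl}$ itself, and the orthogonal basis property converts that functional identity into equality of coefficients. The only minor care needed is to ensure that the conclusion applies equally to the fermionic and bosonic cases, which it does because the $\pm$ sign in \eqref{Decoup} is the same on both sides of $T_{kl}(x) = T_{lk}(x)$. No further calculation is required.
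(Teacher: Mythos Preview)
Your proposal is correct and matches the paper's approach exactly: the paper states the corollary without a separate proof, treating it as immediate from Theorem~\ref{theorem_kl_sym} and the orthogonality of the Chebyshev polynomials, which is precisely the argument you spell out. There is nothing to add.
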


\begin{theorem}\label{theorem_parity}
Functions $T_{kl}(x)$ are either even or odd functions, depending on the value 
of $k+l$, $T_{kl}(-x) = (-1)^{k+l+1} T_{kl}(x)$.
\end{theorem}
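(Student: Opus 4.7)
The plan is to prove the parity identity by working on the two pieces $T_{kl}^-(x)$ and $T_{kl}^+(x)$ separately, showing that each maps to the \emph{other} one under $x \mapsto -x$ (up to a common sign), and then recombining them through the decomposition $T_{kl}(x) = T_{kl}^-(x) \pm T_{kl}^+(x)$ of equation (\ref{Decoup}). The only ingredient from outside the appendix I need is the elementary parity property $T_m(-y) = (-1)^m T_m(y)$.

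First I would rewrite $T_{kl}^-(-x) = \int_{-1}^{-x} T_k(-x-\tau-1)\, T_l(\tau)\, d\tau$ and perform the change of variable $\tau = -s$. The limits become $s \in [x,1]$ (exactly the interval appearing in $T_{kl}^+$), while the integrand becomes $T_k(-(x-s+1))\, T_l(-s)$. Applying the Chebyshev parity to both factors and letting the sign from $d\tau = -ds$ cancel the orientation reversal, I expect to obtain
\begin{equation*}
T_{kl}^-(-x) = (-1)^{k+l+1}\, T_{kl}^+(x).
\end{equation*}
The same substitution applied to $T_{kl}^+(-x)$, whose integrand's constant shift is $+1$ rather than $-1$, should symmetrically give $T_{kl}^+(-x) = (-1)^{k+l+1}\, T_{kl}^-(x)$.

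Finally I would add these two relations with the sign $\pm$ from (\ref{Decoup}). Because each transformation swaps $T_{kl}^\pm \leftrightarrow T_{kl}^\mp$, the combination $T_{kl}^-(x) \pm T_{kl}^+(x)$ is recovered on the right-hand side with the uniform prefactor $(-1)^{k+l+1}$, yielding the claim. The main obstacle is bookkeeping: three sign sources must be tracked in concert — the Chebyshev parities $(-1)^k$ and $(-1)^l$, the combined Jacobian and limit reversal from $\tau \mapsto -s$, and the interchange of the constant $\mp 1$ inside the argument of $T_k$ that sends $T_{kl}^-$ into $T_{kl}^+$ (and vice versa). Getting the exponent $k+l+1$ rather than $k+l$ hinges precisely on this last swap, so once those three contributions are collected correctly the identity drops out by orthogonality with no further work, and the corresponding $t$-coefficient statement $t_{kl}^j = 0$ for $k+l+j$ even would follow immediately by projecting onto $T_j$ and using the parity of $T_j$ itself.
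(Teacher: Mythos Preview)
Your proposal is correct and follows essentially the same route as the paper: substitute $\tau\mapsto -\tau$ in $T_{kl}^\mp(-x)$, apply the Chebyshev parity $T_m(-y)=(-1)^m T_m(y)$ to both factors, and observe that the result is $(-1)^{k+l+1}T_{kl}^\pm(x)$, after which the decomposition (\ref{Decoup}) gives the claim. One small clarification on your bookkeeping: the extra factor of $-1$ that turns $(-1)^{k+l}$ into $(-1)^{k+l+1}$ is not produced by the argument swap $-1\leftrightarrow +1$ itself but by the explicit overall minus sign in the definition (\ref{Int2}) of $T_{kl}^+$; once you account for that, your three-source tally closes exactly.
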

\begin{proof}\renewcommand{\qedsymbol}{}
Using symmetry property of Chebyshev polynomials, $T_k(-x) = (-1)^k T_k(x)$,
we can write
\begin{align*}
T_{kl}^-(-x) &= \int_{-1}^{-x}T_k(-x-\tau-1)T_l(\tau)d\tau \\&=
              (-1)^k\int_{-1}^{-x}T_k(x+\tau+1)T_l(\tau)d\tau \\&=
              \left\{\substack{-\tau = t\\dt = -d\tau}\right\} =\\
             &= -(-1)^k\int_{1}^{x}T_k(x-t+1)T_l(-t)dt\\&
              = (-1)^{k+l+1}\int_{1}^{x}T_k(x-t+1)T_l(t)dt \\
             &= -(-1)^{k+l+1}\int_{x}^{1}T_k(x-t+1)T_l(t)dt\\&
              = (-1)^{k+l+1} T^+_{kl}(x).
\end{align*}
Similarly, $T_{kl}^+(-x) = (-1)^{k+l+1}T_{kl}^-(x)$.
\begin{align*}
T_{kl}(-x) &= T_{kl}^-(-x) \pm T_{kl}^+(-x) \\&=
(-1)^{k+l+1}(T_{kl}^+(x) \pm T_{kl}^-(x)) = (-1)^{k+l+1}T_{kl}(x).
\end{align*}
\end{proof}
\begin{corollary}
    $t^j_{kl} = 0$ for even values of $k + l + j$.
\end{corollary}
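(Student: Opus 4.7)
The plan is to deduce the vanishing directly from the parity result of Theorem~\ref{theorem_parity} combined with the known parity of the basis polynomials $T_j(x)$ and the orthogonality of the Chebyshev system.

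First I would substitute the expansion $T_{kl}(x) = \sump_j t^j_{kl} T_j(x)$ into both sides of the identity $T_{kl}(-x) = (-1)^{k+l+1} T_{kl}(x)$. On the left, I use $T_j(-x) = (-1)^j T_j(x)$, which follows from the definition of the Chebyshev polynomials of the first kind (or trivially from $T_j(\cos\theta)=\cos(j\theta)$ under $\theta\mapsto\pi-\theta$). This yields
\begin{align*}
\sump_j (-1)^j t^j_{kl}\, T_j(x) = (-1)^{k+l+1} \sump_j t^j_{kl}\, T_j(x).
\end{align*}

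Next I would invoke the orthogonality of $\{T_j\}$ on $[-1,1]$ with weight $1/\sqrt{1-x^2}$, as recorded just before the theorem, so that two equal Chebyshev expansions must agree coefficient by coefficient. This gives, for every $j$,
\begin{align*}
\bigl[(-1)^j - (-1)^{k+l+1}\bigr] t^j_{kl} = 0.
\end{align*}
Hence $t^j_{kl}=0$ whenever $(-1)^j \neq (-1)^{k+l+1}$, i.e.\ whenever $j$ and $k+l+1$ have opposite parity, which is exactly the condition that $k+l+j$ is even.

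There is no real obstacle here: the corollary is essentially a coefficient-matching consequence of Theorem~\ref{theorem_parity}. The only small subtlety is that the primed sum treats the $j=0$ term with a factor $1/2$, but this does not affect the argument since orthogonality still forces each coefficient separately to vanish; in particular for $j=0$, $k+l$ even forces $t^0_{kl}=0$ via the same relation $[1-(-1)^{k+l+1}]t^0_{kl}=0$.
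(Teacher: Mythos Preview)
Your argument is correct and is essentially the paper's own proof: both use the parity $T_j(-x)=(-1)^j T_j(x)$ together with Theorem~\ref{theorem_parity} and the uniqueness of Chebyshev expansions (orthogonality) to conclude that only terms with $j$ of the same parity as $k+l+1$ survive. Your version makes the coefficient-matching step explicit via $\bigl[(-1)^j - (-1)^{k+l+1}\bigr]t^j_{kl}=0$, but the underlying reasoning is identical.
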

\begin{proof}\renewcommand{\qedsymbol}{}
Chebyshev polynomials $T_k(x)$ are even functions for even $k$ and odd 
functions for odd $k$. Therefore, expansion (\ref{t_coeff}) can contain only even $j$ terms when $T_{kl}(x)$ is even and only odd $j$ terms
when $T_{kl}(x)$ is odd. Combining this observation with the proven theorem, we conclude that terms with different parities of $k + l + 1$
and $j$ do not contribute to the sum (\ref{t_coeff}).
\end{proof}

\subsection{Recurrence relation for the convolutions $T^{\pm}_{kl}(x)$}

Chebyshev polynomials of shifted argument fulfill the following recursion relation,
\begin{align*}
    T_{k+1}(x-\tau\pm1) &= 2x T_{k}(x-\tau\pm1) - 2\tau T_{k+1}(x-\tau\pm1) \\&
    \pm2 T_{k+1}(x-\tau\pm1) - T_{|k-1|}(x-\tau\pm1). 
\end{align*}

Therefore, integrands in Eqs.~(\ref{Int1},\ref{Int2}) can be expressed as
\begin{widetext}
\begin{align}
  T_{k+1}(x-\tau\pm 1) T_l(\tau) = 2xT_{k}(x-\tau\pm 1) T_l(\tau) - &T_k(x-\tau\pm 1)\left[2\tau T_{l}(\tau)\right] \nonumber \\
  & \pm 2T_k(x-\tau\pm 1)T_l(\tau) - T_{|k-1|}(x-\tau\pm 1)T_l(\tau) \nonumber \\
  = 2xT_{k}(x-\tau\pm 1) T_l(\tau) - &T_k(x-\tau\pm 1)\left[T_{l+1}(\tau) + T_{|l-1|}(\tau)\right] \nonumber \\
  & \pm 2T_k(x-\tau\pm 1)T_l(\tau) - T_{|k-1|}(x-\tau\pm 1)T_l(\tau).
\end{align}
\end{widetext}
Plugging this into Eq.~\eqref{Int1} or \eqref{Int2} we get a recurrence relation for
$T_{k,l}^{\pm}(x)$,
\begin{align}
  T_{k+1,l}^{\pm}(x) &= 2x T_{k,l}^{\pm}(x) - T_{k,l+1}^{\pm}(x) \nonumber\\&- T_{k,|l-1|}^{\pm}(x) \pm 2 T_{k,l}^{\pm}(x) -
  T_{|k-1|,l}^{\pm}(x),
  \label{Rec}
\end{align}
with boundary conditions
\begin{align*}
  T_{k,0}^{-}(x) &= T_{0,k}^{-}(x) = \int_{-1}^{x} T_k(x-\tau -1)d\tau \\&= \left\{\substack{x-\tau - 1 = t\\dt = -d\tau}\right\} =
  -\int_{x}^{-1} T_k(t)dt = \int_{-1}^{x} T_k(t) dt,\\
  T_{k,0}^{+}(x) &= T_{0,k}^{+}(x) = -\int_{x}^{1} T_k(x-\tau +1)d\tau \\&= \left\{\substack{x-\tau + 1 = t\\dt = -d\tau}\right\} =
  \int_{1}^{x} T_k(t)dt = -\int_{x}^{1} T_k(t) dt,
\end{align*}
that lead to
\begin{align}
    T_{k,0}^{-}(x) &= T_{0,k}^{-}(x) = \nonumber \\
    &\begin{cases}
        \frac{T_2(x) - T_0(x)}{4}, &k=1 \\
        \frac{1}{2}\left(\frac{T_{k+1}(x)+(-1)^k}{k+1} - \frac{T_{|k-1|}(x)+(-1)^k}{k-1}\right), &k\neq1
    \end{cases}
\end{align}
\begin{align}
  T_{k,0}^{+}(x) &= T_{0,k}^{+}(x) = \nonumber \\
  &\begin{cases}
      \frac{T_2(x) - T_0(x)}{4}, &k=1 \\
      \frac{1}{2}\left(\frac{T_{k+1}(x)-1}{k+1} - \frac{T_{|k-1|}(x)-1}{k-1}\right), &k\neq 1
  \end{cases}
\end{align}
or generally,
\begin{align}
  T_{k,0}^{\pm}(x) &= T_{0,k}^{\pm}(x) = \nonumber \\
  &\begin{cases}
    \frac{T_2(x) - T_0(x)}{4}, &k=1 \\
    \frac{1}{2}\left(\frac{T_{k+1}(x)-(\pm 1)^{k+1}}{k+1} - \frac{T_{|k-1|}(x)-(\pm 1)^{k-1}}{k-1}\right), &k\neq 1.
  \end{cases}
\end{align}

In order to apply recurrence~\ref{Rec}, one needs expressions for $T_{1,l}(x)$
in addition to the boundary values given above. These can be derived from 
Eq.~\ref{Rec} with $k=0$,
\begin{align}
  T_{1,l}(x) = \frac{1}{2} \left[2x T^{\pm}_{0,l}(x) - T_{0,l+1}^{\pm}(x) - T_{0,|l-1|}^{\pm}(x)  \pm T^{\pm}_{0,l}(x) \right].
\end{align}

The complete set of equations reads as follows,
\begin{widetext}
\begin{align}
  T_{k,0}^{\pm}(x) &= T_{0,k}^{\pm}(x) =
  \frac{1}{2}\left(\frac{T_{k+1}(x)-(\pm 1)^{k+1}T_0(x)}{k+1} -
  \frac{T_{|k-1|}(x)-(\pm 1)^{k-1}T_0(x)}{k-1}(1-\delta_{k,1})\right),\nonumber\\
  T_{1,l}(x) &= \frac{1}{2} \left[2x T^{\pm}_{0,l}(x) - T_{0,l+1}^{\pm}(x) - T_{0,|l-1|}^{\pm}(x)  \pm T^{\pm}_{0,l}(x) \right] \label{ChebFullRel},\\
  T_{k+1,l}^{\pm}(x) &= 2x T_{k,l}^{\pm}(x) - T_{k,l+1}^{\pm}(x) - T_{k,|l-1|}^{\pm}(x) \pm 2 T_{k,l}^{\pm}(x) - T_{|k-1|,l}^{\pm}(x).\nonumber
\end{align}
\end{widetext}

\subsection{Chebyshev expansion of the convolution.}

In this paragraph we derive a recurrence relation that allows for
efficient evaluation of $t_{k,l}^j$ in constant time per coefficient.
Let us consider Chebyshev expansion of Eq.~\ref{ChebFullRel},
\begin{align}
  T_{k,l}^{\pm}(x) &= \sump_{j=0}^{l+k+1} t_{k,l}^{j(\pm)}T_{j}(x).
\end{align}
Summation limit $k+l+1$ comes from the fact that $T_k(x-\tau\pm1)T_l(\tau)$
is a polynomial of degree at most $k+l$ in $\tau$, and the integration in
(\ref{Int1},\ref{Int2}) adds one to the degree.

First, consider the boundary conditions. In case of $l=0$ the coefficients $t_{k,0}^{j(\pm)}$ are obtained from the solution of the following
equations,
\begin{align*}
   \sump_{j=0}^{k+1} t_{k,0}^{j(\pm)}T_{j}(x) &=
   \frac{T_{k+1}(x)-(\pm 1)^{k+1}T_0(x)}{2(k+1)} - \\
   &\frac{T_{|k-1|}(x)-(\pm 1)^{k-1}T_0(x)}{2(k-1)}(1-\delta_{k,1}).
\end{align*}
$k = 1$:
\begin{align*}
  t_{1,0}^{j(\pm)} =
  \begin{cases}-\frac{1}{2}, &j=0\\
  \frac{1}{4}, &j =2 \\
  0, &\mathrm{otherwise}.
\end{cases}
\end{align*}
$k \neq 1$:
\begin{align*}
  \sump_{j=0}^{k+1} t_{k,0}^{j(\pm)}&T_{j}(x) = \frac{T_{k+1}(x)-(\pm 1)^{k+1}T_0(x)}{2(k+1)}  \\
     &- \frac{T_{|k-1|}(x)-(\pm 1)^{k-1}T_0(x)}{2(k-1)}\\
  	 &= \frac{T_{k+1}(x)}{2k + 2} - \frac{T_{|k-1|}(x)}{2k - 2}
     + \frac{2(\pm 1)^{k+1}}{k^2 - 1}\frac{T_0(x)}{2}.
\end{align*}
By grouping Chebyshev polynomials of the same order we get the following expressions,
\begin{align*}
  t_{k,0}^{j(\pm)} = \begin{cases}
                       \frac{2(\pm 1)^{k+1}}{k^2-1}, &j = 0 \\
                       -\frac{1}{2k-2}, &j = k -1, k>0 \\
                       \frac{1}{2k+2}, &j = k + 1, k>0 \\
                       1, &k = 0, j = 1.
  \end{cases}
\end{align*}
$l>0, k=1$:
\begin{align*}
  \sump_{j=0}^{l+2}&t_{1,l}^{j(\pm)} T_j(x) =
  \frac{1}{2}\sump_{j=0}^{l+2} \bigg[
  \frac{t_{0,l}^{j-1(\pm)}}{2^{\delta_{j,1}}}\delta_{j>0} \\&+
  \frac{1}{2}t_{0,l}^{0(\pm)}\delta_{j,1} +
  2^{\delta_{j,0}}t_{0,l}^{j+1(\pm)}\delta_{j\leq l}
  \\&- t_{0,l+1}^{j(\pm)}
  - t_{0,l-1}^{j(\pm)} \delta_{j\leq l}
  \pm 2 t_{0,l}^{j(\pm)} \delta_{j\leq l+1} \bigg]T_{j}(x).
\end{align*}
General case $l>0, k>0$:
\begin{align*}
  \sump_{j=0}^{k+l+2}&t_{k+1,l}^{j(\pm)} T_j(x) = \sump_{j=0}^{k+l+2} \bigg[
    \frac{t_{k,l}^{j-1(\pm)}}{2^{\delta_{j,1}}}\delta_{j>0} \\&+
    \frac{1}{2}t_{k,l}^{0(\pm)}\delta_{j,1} +
    2^{\delta_{j,0}}t_{k,l}^{j+1(\pm)}\delta_{j\leq k+l} \\ &
    - t_{k,l+1}^{j(\pm)}
    - t_{k,l-1}^{j(\pm)} \delta_{j\leq k+l}
    \pm 2 t_{k,l}^{j(\pm)} \delta_{j\leq k+l+1}\\ &
    - t_{k-1,l}^{j(\pm)} \delta_{j\leq k+l}
\bigg] T_{j}(x).
\end{align*}

Now we can summarize the results for the $t^{j(\pm)}_{kl}$ coefficients:
\begin{subequations}
\begin{align}
    k=1&,l=0:\nonumber\\&t_{1,0}^{j(\pm)} =
    \begin{cases}
        -\frac{1}{2}, &j = 0 \\
        \frac{1}{4}, &j = 2 \\
        0, &\mathrm{otherwise}.
    \end{cases}\\
    k\neq 1&,l=0:\nonumber\\&t_{k,0}^{j(\pm)} =
    \begin{cases}
        \frac{2(\pm 1)^{k+1}}{k^2-1}, &j = 0 \\
        -\frac{1}{2k-2}, &j = k -1, k>0 \\
        \frac{1}{2k+2}, &j = k + 1, k>0 \\
        1, &k = 0, j = 1.
    \end{cases}\\
    k=1&,l>0:\nonumber\\&t_{1,l}^{j(\pm)} = \frac{1}{2}
    \frac{t_{0,l}^{j-1(\pm)}\delta_{j>0}}{2^{\delta_{j,1}}} + 
    \frac{t_{0,l}^{0(\pm)}\delta_{j,1}}{2}
    +2^{\delta_{j,0}}t_{0,l}^{j+1(\pm)}\delta_{j\leq l}\nonumber \\ 
    &- t_{0,l+1}^{j(\pm)}
    - t_{0,l-1}^{j(\pm)} \delta_{j\leq l}
    \pm 2 t_{0,l}^{j(\pm)} \delta_{j\leq l+1}. \\
    k>0&,l>0:\nonumber\\&t_{k+1,l}^{j(\pm)} =
    \frac{t_{k,l}^{j-1(\pm)}\delta_{j>0}}{2^{\delta_{j,1}}} +
    \frac{t_{k,l}^{0(\pm)}\delta_{j,1}}{2} +
    2^{\delta_{j,0}}t_{k,l}^{j+1(\pm)}\delta_{j\leq k+l}\nonumber\\
    &- t_{k,l+1}^{j(\pm)}
    - t_{k,l-1}^{j(\pm)} \delta_{j\leq k+l}
    \pm 2 t_{k,l}^{j(\pm)} \delta_{j\leq k+l+1}\nonumber\\&
    - t_{k-1,l}^{j(\pm)} \delta_{j\leq k+l}.
\end{align}
\end{subequations}

And the final expression for $t$-coefficients will be
\begin{align}
t^{j}_{kl} = t^{j(-)}_{kl} \pm t^{j(+)}_{kl},
\end{align}
where the plus (minus) sign corresponds to the fermionic (bosonic) convolution.

\section{Comparison to the Legendre basis}
Ref.~\onlinecite{Boehnke11} proposed the use of Legendre polynomials as a basis 
for Monte Carlo Green's functions. These Green's functions could in principle 
be adapted to the real materials context. Fig.~\ref{fig:CompLegendre} shows the 
maximum error curves of Fig.~\ref{fig:ChebDiff} in the main text. It is evident 
that the Legendre basis also converges exponentially, but is substantially less 
compact than the Chebyshev basis. This is expected from the minmax properties 
of the Chebyshev polynomials. It is also the expected behavior for core states, 
as the zeros of the Chebyshev polynomials are closer to $-1$ and $1$ than those 
of Legendre polynomials of the same order,\cite{Szego36} and are therefore better able to 
resolve the exponential decay core states and high lying excitations.

The two types of polynomials differ in several technical aspects. For instance, 
the roots of Chebyshev polynomials are known analytically, whereas 
those of Legendre polynomials need to be evaluated numerically and tabulated. 
On the other hand, the orthogonality relation of the Legendre polynomials is 
easier.

\begin{figure}[tbh]
\includegraphics[width=\columnwidth]{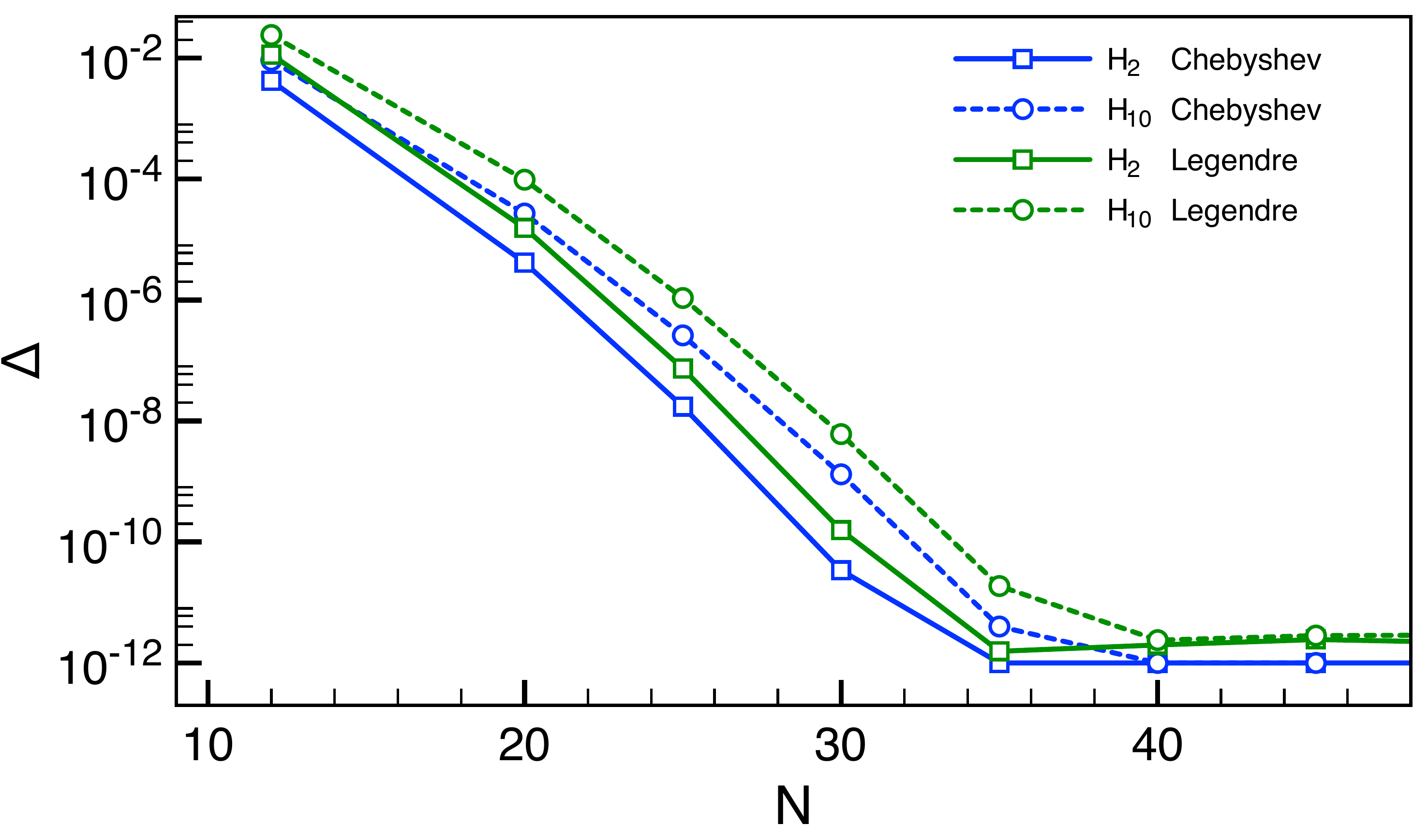}
\caption{Convergence of the Hartree-Fock Green's function with the number of 
Chebyshev and Legendre expansion coefficients. Blue curves correspond to the 
maximum difference of the Chebyshev expansion when compared to the exact result 
(Fig.~\ref{fig:ChebDiff}). Green curves show the same data, for the Legendre 
basis.\cite{Boehnke11} H$_2$ molecule (open square) and H$_{10}$ chain (open circle) parameters 
as specified in Table~\ref{tab:Parameters}.}
\label{fig:CompLegendre}
\end{figure}
\end{document}